
\documentclass[letterpaper, 10 pt, conference]{ieeeconf}  
                        
\IEEEoverridecommandlockouts                              
\overrideIEEEmargins                                      

\usepackage[english]{babel}
\usepackage{flushend,booktabs}
\usepackage{bigints}
\usepackage{graphicx} 
\usepackage{psfrag}
\usepackage{amsfonts}
\usepackage{amssymb}
\usepackage{amsmath}
\usepackage{graphicx}
\usepackage{psfrag}
\usepackage{ae,epsfig,url}
\usepackage{verbatim} 
\usepackage{enumerate} 
\usepackage{epstopdf}
\usepackage{tikz}
\usetikzlibrary{shapes,arrows}
\usetikzlibrary{positioning,fit}
\usepackage{color}
\usetikzlibrary{positioning}
\usepackage{caption}
\usepackage{subcaption}
\usepackage{tikz}
\usetikzlibrary{shapes,arrows}
\usetikzlibrary{positioning,fit}
\usepackage{bm}

\usetikzlibrary{calc}
\usetikzlibrary{matrix}

\usepackage[font=small,labelfont=bf]{caption}


\newcounter{teocount}

\newcounter{remcount}

\newcounter{excount}

\newtheorem{remm}[remcount]{Remark}

\newtheorem{theorem}[teocount]{Theorem}

\newtheorem{ex}[excount]{Example}

\newenvironment{remark}{\begin{remm}\rm }{\hfill \hspace*{1pt} \hfill
$\star$\end{remm}}





                                                          

\author{Ross Drummond and Yang Zheng
\thanks{This work was supported by the UK Engineering and Physical Sciences Research Council (EPSRC) through grant EP/P005411/1-``Structured electrodes for improved energy storage''.}
\thanks{R. Drummond is with the Department of Engineering Science, University of Oxford, 17 Parks Road, OX1 3PJ, Oxford, United Kingdom. Email: ross.drummond@ eng.ox.ac.uk.}
\thanks{Y. Zheng is with SEAS and CGBC at Harvard University, Cambridge, MA 02138. Email: zhengy@g.harvard.edu.
}
}


\title{\bf 
Impact of Disturbances on Mixed Traffic Control with Autonomous Vehicles
}

\begin{document}
\maketitle
\thispagestyle{plain}
\pagestyle{plain}

\begin{abstract}
This paper investigates the impact of disturbances on controlling an autonomous vehicle to smooth mixed traffic flow in a ring road setup.  
By exploiting the ring structure of this system, 
it is shown that velocity perturbations impacting any vehicle on the ring enter an uncontrollable and marginally stable mode defined by the sum of relative vehicle spacings. These disturbances are then integrated up by the system and cannot be unwound via controlling the autonomous vehicle. In particular, if the velocity disturbances are zero-mean Gaussians, then the traffic flow on the ring will undergo a random walk with the variance growing indefinitely and independently of the control policy applied. In contrast, the impact of acceleration disturbances is benign as these disturbances do no enter the uncontrollable mode, meaning that they can be easily regulated using the autonomous vehicle. Our results support and complement the existing theoretic analysis and field experiments.

\end{abstract}


\section{Introduction}

Fuelled by recent advances in data processing and decision making under uncertainty, it is  projected that fleets of autonomous vehicles will be appearing on public roads in the near future, using advanced sensors and huge databases of LiDAR, mapping and camera data to manoeuvre around the urban cityscape on their own accord \cite{litman}. Whilst many challenges still need to be overcome before this vision is realised (including regulatory, controller robustness, and ethical issues), the promise of a safer and more efficient transportation system offered by vehicle autonomy continues to drive this technology forward at a breakneck speed. 

One particularly promising avenue for autonomous vehicles is in improving the efficiency of traffic flows, with the autonomous vehicles acting as control actuators on the traffic and having the crucial benefit of being able to communicate with the vehicles around them~\cite{guanetti2018control,li2017dynamical}. In this context, the potential for autonomy 
to dampen  ``stop-and-go'' waves in the traffic flow is particularly exciting \cite{exp,laval}. These waves result in concertina like blockages in the traffic and form naturally in congested roads even when no obstacles are present~\cite{sugiyama2008traffic}. Unfortunately, 
the efficiency of dampening these waves using classical traffic control methods like variable speed limits might be limited~\cite{smulders}. 
Instead, by carefully exploiting the information of the traffic flow gathered by their sensors, autonomous vehicles are promising to  efficiently dampen them by predicting the response of the traffic network and reacting accordingly.

The experiment of \cite{exp} represents 
one striking demonstration of the potential of autonomous vehicles to damp ``stop-and-go'' traffic waves. In this work, 20+ human drivers 
were told to drive around in a ring at a constant speed, with the caveat being that one of the cars was capable of being driven autonomously. With the 
control of the autonomous vehicle 
turned off, ``stop-and-go'' waves quickly formed in the ring, causing 
the vehicles to periodically grind to a full stop. Upon activating the 
autonomous vehicle 
(which employed a simple ``slow-in fast-out'' control policy that slowed the ring speed down before accelerating to the desired equilibrium velocity),  these waves were quickly dissipated and the vehicle ring stabilised around the desired constant velocity. Thus, with a rather simple control policy applied through some agent (the autonomous vehicle), the experiment in \cite{exp} demonstrated the high potential for autonomous vehicles to smooth traffic flow, as predicted by theory~\cite{cui, yang}.

The powerful implications of the experiment in~\cite{exp} has motivated recent research, including \cite{cui, yang, giammarino2019weaker}, into analysing its results from a control theoretic perspective. Doing so provides insight into the formation of the ``stop-and-go'' waves and answers how there could exist a control law that could optimally dissipate them. The most prominent of these studies is \cite{yang} which developed an analytic controllability decomposition of the experimental set-up in \cite{exp} by exploiting the ring structure of the traffic flow. One main result of \cite{yang} showed that a linearised model of the experiment in \cite{exp} only had one uncontrollable mode (related to the length/centroid of the ring) when the autonomous vehicle was used as the controller, with this uncontrollable mode being marginally stable. Accordingly, as long as this mode remained unperturbed, a controller can always be found to stabilise the system. In this way, the authors in \cite{yang} exploited analytic results from linear control theory to explain the powerful experimental observations of \cite{exp}.

Building upon the results of \cite{yang}, this paper also considers the controllability of traffic flow rings but focuses on the potential pitfalls that could be encountered due to disturbances entering the system's uncontrollable, marginally stable mode. The main result is given in Theorem \ref{prop1} which states that, unless the overall disturbance is zero, then velocity disturbances impacting any vehicle in the ring will be integrated by the uncontrollable mode and, furthermore, there does not exist a controller capable of dissipating them. In contrast, the impact of acceleration disturbances on the ring stability are more benign and do not enter this 
uncontrollable mode. To indicate the significance of this result, it implies that if it is assumed the system is subject to zero-mean, Gaussian velocity disturbances, then the sum of the spacings of the vehicles on the ring will undergo a random walk, an undesirable response with the variance of the vehicles growing indefinitely no matter what control policy is applied. The purpose of this paper is to generalise the analysis of \cite{yang} by accounting for external disturbances and, since only acceleration disturbances are encountered in practice, to provide the theory supporting the field experiments of \cite{exp}.

The paper is structured as follows. Section \ref{sec:mod} details the mixed traffic flow model. Section \ref{sec:thm} states the main theorem of the paper on the impact of disturbances on the  stability of mixed traffic flow rings. Section \ref{sec:sim} contains numerical simulations of the traffic flow ring under various disturbances verifying the statement of the main theorem, and we conclude the paper in Section~\ref{sec:con}.

\section{Model for mixed traffic flow in a ring}\label{sec:mod}

To arrive at the structural controllability result of Theorem \ref{prop1}, the vehicle model is first introduced before the model for the vehicle ring. The model derivation of this section closely follows the procedure of \cite{yang}, and, in fact,  the results of this paper can be seen as an extension of \cite{yang} to account for the impact of disturbances.

\subsection{Linearised human-driver model}
The typical approach to model a human-driven vehicle is to apply Newton's laws to the vehicle and then express its acceleration $\dot{v}_i(t)$ as a function of its spacing $s_i$, relative velocity $\dot{s}_i(t)$ between its own and the preceding vehicle $i-1$ as well as its velocity $v_i(t)$
\begin{align}\label{v_nonlin}
\dot{v}_i(t) = F(s_i(t),\dot{s}_i(t), v_i(t)),
\end{align}
with $ F= \mathbb{R} \times \mathbb{R} \times \mathbb{R} \to \mathbb{R}$ known as the forcing function \cite{helbing,orosz}. This equation forms the basis of the considered model for human-driven vehicles, however, for the controllability analysis of Theorem \ref{prop1}, a linear model is required. Thus, we linearise \eqref{v_nonlin} around some equilibrium. Denote $s^*$ and $v^*$ as, respectively, the equilibrium spacing and velocities of all the vehicles satisfying $\tilde{s}_i(t) = s_i(t)-s^*$, $\tilde{v}_i(t) = v_i(t)-v^*$ and
\begin{align}
F(s^*,0,v^*) = 0.
\end{align}

The time derivative of the local perturbations around these equilibria relates to the linearised model dynamics of interest to this paper. For the localised spacing dynamics, these local dynamics are $\dot{\tilde{s}}_i(t) = \dot{s}_i(t) = v_{i-1}(t)-v_i(t) = \tilde{v}_{i-1}(t)-\tilde{v}_i(t)$ and those for the velocities are obtained by taking a Taylor expansion of the forcing term, so 
\begin{align}
\! \dot{\tilde{v}}_i(t) = \dot{v}_i(t) =  \underbrace{F(s^*,0,v^*)}_{ =\, 0} + \frac{\partial F}{\partial s}\tilde{s}_i 
 + \frac{\partial F}{\partial \dot{s}}\dot{\tilde{s}}_i 
 + \frac{\partial F}{\partial v}\tilde{v}_i.
\end{align}
Defining $\alpha_1 = \frac{\partial F}{\partial s}$,  $\alpha_2 =  \frac{\partial F}{\partial \dot{s}}-\frac{\partial F}{\partial v}$ and $\alpha_3 =  \frac{\partial F}{\partial \dot{s}}$, then the linearised human-driven vehicle can be expressed in the following state-space form
\begin{align}\label{sys}
\begin{cases} \dot{\tilde{s}}_i(t) = \tilde{v}_{i-1}(t)-\tilde{v}_i(t), \\
\dot{\tilde{v}}_i(t) = \alpha_1\tilde{s}_{i}(t)-\alpha_2\tilde{v}_i(t) + \alpha_3 \tilde{v}_{i-1}(t).
 \end{cases}
\end{align}

The next step is to choose the forcing function $F(s_i(t),\dot{s}_i(t), v_i(t))$. In this paper, the modelling framework of the optimal velocity model (OVM) is adopted \cite{orosz,bando}, with a forcing function  
\begin{align}
F(s_i(t), \dot{s}_i(t), v_i(t)) = \alpha(V(s_i(t))-v_i(t))+\beta \dot{s}_i(t).
\end{align}
Here, $\alpha>0$ magnifies the difference between the driver's spacing dependent desired velocity and that which they are driving whilst $\beta >0$ incorporates the velocity difference between the vehicle and the preceding vehicle into the driver's actions \cite{cui,wilson}. Typically, the desired spacing dependent velocity of the driver is modelled as the piece-wise function  \cite{wilson}
\begin{align}
V(s) = \begin{cases} 0 & s \leq s_{st}, \\ f_v(s), & s_{\text{st}}<s<s_{\text{go}}, \\
v_{\text{max}},  & s \geq s_{\text{go}},\end{cases}
\end{align}
which stops the vehicle when the spacing is small (to avert a crash), saturates the velocity at $v_{\text{max}}$ and follows a monotonically decreasing function as the spacing approaches the stopping value $s_{\text{st}}$.  A typical form of this function is 
\begin{align}
f_v(s)= \frac{v_{\text{max}}}{2}\left(1-\cos\left(\pi\left(\frac{s-s_{\text{st}}}{s_{\text{go}}-s_{\text{st}}}\right)\right)\right).
\end{align}
When using the OVM, the equilibrium velocity is given by $v^* = V(s^*)$ and the local perturbations around this equilibrium are governed by \eqref{sys} with coefficients 
$\alpha_1 = \alpha \frac{dV}{ds}\Big|_{s = s^*}$, $\alpha_2 = \alpha+\beta$ and $\alpha_3 = \beta$ where $\frac{dV}{ds}\big|_{s = s^*}$ is the slope of the spacing dependent driver velocity evaluated at the equilibrium spacing $s^*$. When the spacing is not too extreme, lying in the region $s_{\text{st}} < s^*< s_{\text{go}}$, then $\frac{dV}{ds}\big|_{s = s^*} = \frac{f_v(s)}{ds}\big|_{s = s^*}$ where
\begin{align}
\frac{f_v(s)}{ds}= \frac{v_{\text{max}}\pi}{2(s_{\text{go}}-s_{\text{st}})}\sin\left(\pi\left(\frac{s-s_{\text{st}}}{s_{\text{go}}-s_{\text{st}}}\right)\right).
\end{align}
This will be the region considered in the simulation of Section \ref{sec:sim}.

\begin{remark}
As emphasized in \cite{yang}, the structural nature of the following  controllability result in Theorem \ref{prop1} means that it can be easily generalised to other human-driven vehicle models, such as the intelligent driver model. Throughout this paper, due to some technicality, it is assumed that $\alpha_1-\alpha_2\alpha_3 + \alpha_3 \neq 0$ (see~\cite[Theorem 2]{yang} for more discussions).
\end{remark}


\subsection{Dynamics of a mixed traffic flow ring}
Consider a one lane traffic flow ring of $n$ vehicles with one of the vehicles in the ring being autonomous and the remaining $n-1$ being human driven with dynamics as in \eqref{sys}. In a state-space form, this system is described by
\begin{align}\label{sys1}
\dot{x}(t) = Ax(t) + Bu(t)+d(t), 
\end{align}
where
\begin{align}
d(t) = \begin{bmatrix} d^v_1(t), & d^a_1(t),  & \dots, & d^v_{n}(t), & d^a_{n}(t) \end{bmatrix}^T
\end{align}
is a time-varying vector of dimension $2n$ that contains the disturbances impacting the velocity $d^v \in \mathbb{R}^n$ and acceleration $d^a \in \mathbb{R}^n$ that could be due to external forces like wind gusts or enter simply from the stochastic nature of human drivers; $x \in \mathbb{R}^{2n}$ is the state of the traffic flow ring containing each vehicles relative spacings and velocities around the equilibrium; $u \in \mathbb{R}$ is the acceleration input of the single autonomous vehicle which acts as the control action. The considered system has a similar structure to~\cite{yang} except it also includes the disturbances $d(t)$, whose impact upon the stability of the traffic ring is the main 
interest of this paper.  

The state-space matrices of the system are of the form 
\begin{align}\label{state_mats}
A = \begin{bmatrix} C_1 & 0 & \dots & \dots & 0 & C_2 \\ 
A_2 & A_1 & 0  & \dots & \dots & 0 \\ 0 & A_2 & A_1 & 0 & \dots & 0 \\
\vdots & \ddots & \ddots & \ddots & \ddots & \vdots \\
0 & \dots & 0 & A_2 & A_1 & 0 \\ 
0 & \dots & \dots & 0 & A_2 & A_1 \end{bmatrix},B = \begin{bmatrix}
B_1 \\ B_2 \\ B_2 \\ \vdots \\ B_2
\end{bmatrix}\!,
\end{align}
and 
\begin{subequations}\begin{align}
A_1  &= \begin{bmatrix} 0 & -1 \\ \alpha_1 & -\alpha_2\end{bmatrix},
A_2 = \begin{bmatrix} 0 & 1 \\ 0 & \alpha_3\end{bmatrix}, \\
C_1  & = \begin{bmatrix} 0 & -1 \\ 0 & 0\end{bmatrix},
C_2 = \begin{bmatrix} 0 & 1 \\ 0 & 0\end{bmatrix}, \label{eq:C1C2}\\
B_1 &= \begin{bmatrix} 0\\ 1\end{bmatrix},
B_2 = \begin{bmatrix} 0 \\ 0\end{bmatrix},
\end{align}\end{subequations}
where the matrices $C_1$ and $C_2$ in~\eqref{eq:C1C2} correspond to the autonomous vehicle's dynamics. By focusing on the case of traffic flows on rings, as demonstrated in the experiment of \cite{exp}, a block circulant structure can be seen to be emerging in the state-space matrices \eqref{state_mats} which is exploited to obtain analytic controllability results. In this paper, we focus on the linearised model~\eqref{sys1} to analyze the impact of disturbances. The 
extensive simulations with nonlinear car-following models in~\cite{yang,wang2020controllability} suggest that the linearisation analysis captures the behavior of the nonlinear traffic system around the equilibrium state.

\section{Main results: Velocity perturbations enter an uncontrollable, marginally stable mode of the system.} \label{sec:thm}

The main result of this paper is the following theorem stating that for vehicles being driven at equilibrium in a ring, velocity disturbances to any vehicle will enter an uncontrollable~\cite{kalman}, marginally stable mode of the interconnected system (unless its net effect is zero) which just integrates up this disturbance and cannot be unwound by any feedback controller. A consequence of this result (as highlighted in Figure \ref{fig:fig_pert_s}) is that if it is assumed that the velocity disturbances are Gaussian, then the ring's total relative spacing will undergo a random walk which can not be controlled. As an implication, this implies that there exists a velocity disturbance which can destabilise a traffic flow ring controlled by an autonomous vehicle. In contrast, acceleration disturbances are benign and do not enter the uncontrollable mode, explaining the performance of the control used in \cite{exp} which would be subject to external disturbances, e.g., from wind gusts or random adjustments in human driving.

\begin{theorem}\label{prop1}
{Consider the mixed traffic system in a ring road
with one autonomous vehicle and $n-1$ human-driven vehicles given by~\eqref{sys1}.} Unless
\begin{align}\label{noise_cond}
\sum_{i = 1}^{n} d^v_{i} (t) = 0,~\forall t \in [0, \infty)
\end{align}
then any velocity disturbances $d^v$ will be integrated by an uncontrollable, marginally stable mode of system \eqref{sys1}. 
\end{theorem}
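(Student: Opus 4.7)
The plan is to produce an explicit left eigenvector of $A$ at eigenvalue $0$, use the Popov--Belevitch--Hautus (PBH) test to certify that the corresponding scalar mode is both uncontrollable by $u$ and marginally stable, and then left-multiply the perturbed dynamics by this eigenvector to show that only velocity disturbances drive its integrator. Physical intuition, together with the fact that the total ring length is conserved in the unperturbed system, suggests the candidate $w = [1,0,1,0,\ldots,1,0]^T \in \mathbb{R}^{2n}$, which selects precisely the $n$ spacing coordinates out of the $2n$-dimensional state.

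First I would verify $w^T A = 0$. Summing the $n$ spacing rows of $A$ contributes, for each $i$, the telescoping expression $\tilde{v}_{i-1} - \tilde{v}_i$; the ring topology makes the total sum zero, and this includes the autonomous-vehicle block $[C_1,0,\ldots,0,C_2]$ whose first row encodes $\dot{\tilde{s}}_1 = \tilde{v}_n - \tilde{v}_1$. Next, $w^T B = 0$ is immediate because the only nonzero entry of $B$ sits in the $\tilde{v}_1$ slot coming from $B_1 = [0;1]^T$, which $w$ annihilates. The PBH test then certifies that the mode associated with $\lambda = 0$ is uncontrollable; since the remaining eigenvalues of $A$ lie in the open left half-plane by the controllability decomposition of \cite{yang} (using the standing assumption $\alpha_1 - \alpha_2\alpha_3 + \alpha_3 \neq 0$), the zero eigenvalue is simple and the mode is marginally stable.

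To close the argument, I would define $z(t) := w^T x(t) = \sum_{i=1}^n \tilde{s}_i(t)$ and left-multiply~\eqref{sys1} by $w^T$, which yields
\begin{align*}
\dot{z}(t) = w^T A x(t) + w^T B u(t) + w^T d(t) = \sum_{i=1}^n d^v_i(t),
\end{align*}
since $w$ has zeros in every acceleration-disturbance slot of $d(t)$. Hence $z(t) = z(0) + \int_0^t \sum_i d^v_i(\tau)\,\mathrm{d}\tau$, which is an open integrator of the net velocity disturbance, and by the uncontrollability of the mode no feedback policy on $u$ can bend this trajectory. The condition~\eqref{noise_cond} is then exactly what forces the integrand to vanish and $z$ to remain bounded independently of the controller.

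The only real obstacle is the row-by-row bookkeeping needed to confirm $w^T A = 0$ consistently across both the autonomous-vehicle block (via $C_1$ and $C_2$) and the $n-1$ human-driven blocks (via $A_1$ and $A_2$); once the telescoping is made explicit, the uncontrollability and the integrator behaviour each drop out in one line from PBH and linearity, and the absence of the $d^a_i$ components from $w^T d$ gives the companion observation that acceleration disturbances do not excite this mode.
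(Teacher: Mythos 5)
Your proof is correct in substance but takes a more elementary route than the paper. The paper first absorbs the autonomous vehicle's nominal car-following law into a virtual input $\hat{u}$, so that the state matrix becomes block circulant, and then block-diagonalises it with the Fourier matrix $F_n^*\otimes I_2$; the uncontrollable integrator appears as the first coordinate of the first $2\times 2$ block of the transformed system, driven by $\tfrac{1}{\sqrt{n}}\sum_i d^v_i$. Your approach instead exhibits the left null vector $w=[1,0,1,0,\dots,1,0]^T$ directly, checks $w^TA=0$ and $w^TB=0$ (both verifications are right: the spacing rows of $C_1,C_2,A_1,A_2$ telescope around the ring, and $B$'s only nonzero entry sits in the $\tilde v_1$ slot), and invokes PBH to conclude uncontrollability before reading off $\dot z=\sum_i d^v_i$. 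This is essentially the ``alternative'' derivation the paper itself sketches immediately after the theorem by summing $\dot{\tilde s}_i=\tilde v_{i-1}-\tilde v_i+d^v_i$ over the ring. What the paper's circulant decomposition buys, and your argument does not, is that this is the \emph{only} uncontrollable mode and that acceleration disturbances enter only controllable blocks and can therefore be rejected by feedback; your PBH certificate proves existence of the bad mode but says nothing about the rest of the spectrum. Since the theorem as stated only asserts existence, your proof suffices for it.

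One side remark in your write-up is wrong and should be removed or repaired: the remaining eigenvalues of $A$ do \emph{not} lie in the open left half-plane in general --- the human-driven ring can be string-unstable, which is precisely why stop-and-go waves form and why the autonomous vehicle is needed. What \cite{yang} guarantees under $\alpha_1-\alpha_2\alpha_3+\alpha_3\neq 0$ is that the remaining modes are \emph{controllable} (hence stabilisable), not stable. This does not damage your main argument, because the marginal stability of the mode $z=w^Tx$ follows directly from its dynamics $\dot z = \sum_i d^v_i$ having a zero eigenvalue, independently of the rest of the spectrum; just justify it that way rather than via a stability claim on $A$.
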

\begin{proof}
The proof builds upon the decomposition of \cite{yang} but focuses on the impact of external disturbances. 

Define a new virtual input $\hat{u}(t) = u(t)  -(\alpha_1\tilde{s}_1(t)-\alpha_2\tilde{v}_1(t)+\alpha_3 \tilde{v}_n(t))$ corresponding to the difference between the acceleration value when the vehicle is human-driven and the actual control value. With this new variable, \eqref{sys1} can be equivalently written as
\begin{align}\label{sys2}
\dot{x}(t) = \hat{A}x(t) + B\hat{u}(t)+d(t), 
\end{align}
where
\begin{align*}
A = \begin{bmatrix} A_1 & 0 & \dots & \dots & 0 & A_2 \\ 
A_2 & A_1 & 0  & \dots & \dots & 0 \\ 0 & A_2 & A_1 & 0 & \dots & 0 \\
\vdots & \ddots & \ddots & \ddots & \ddots & \vdots \\
0 & \dots & 0 & A_2 & A_1 & 0 \\ 
0 & \dots & \dots & 0 & A_2 & A_1 \end{bmatrix}, B = \begin{bmatrix}
B_1 \\ B_2 \\ B_2 \\ \vdots \\ B_2
\end{bmatrix}.
\end{align*}
As controllability is independent of state feedback for linear systems\footnote{This is a classical result; see for example Page 21 of \url{https://stanford.edu/class/ee363/lectures/inv-sub.pdf}.}, then the controllability of \eqref{sys2} is equivalent to \eqref{sys1}, so the conclusions drawn on the controllability of the new system \eqref{sys2} also hold for the original system \eqref{sys1}. 

To obtain structural controllability results, define $\omega = e^{\frac{2\pi j }{n}}$ characterising the Fourier matrix $F_n$ \cite{marshall, olson}
\begin{align*}
F_n^* = \frac{1}{\sqrt{n}}  \begin{bmatrix} 1 & 1 & 1  & \dots & 1 \\
1 & \omega  & \omega^2 & \dots & \omega^{n-1} \\
1 & \omega^2 & \omega^4 & \dots & \omega^{2(n-1)} \\
\vdots & \vdots & \vdots & & \vdots \\
1 & \omega^{n-1} & \omega^{2(n-1)} & \dots & \omega^{(n-1)(n-1)} 
\end{bmatrix}.
\end{align*}
Then, because $\hat{A}$ is block circulant \cite{marshall, olson}, it can be block-diagonalised by the transformation $\tilde{x} =  (F_n^*\otimes I_n) \hat{x}$ into
\begin{align}\label{dyns_tilde}
\dot{\tilde{x}}(t) &  = \begin{bmatrix}D_1 & & &  \\ & D_2 & & \\ & & \ddots & \\ & & & D_n\end{bmatrix} \tilde{x}(t) 
\\
 & + \frac{1}{\sqrt{n}}\begin{bmatrix}B_1 \\ B_1 \\ \vdots \\ B_1\end{bmatrix}\hat{u}(t) + (F_n^* \otimes I_2)^{-1}d(t), \nonumber
\end{align}
where  $B_1 = [0,1]^T$ and 
\begin{align}
D_i = A_1 + A_2 \omega^{(n-1)(i-1)}, i = 1, 2, \dots, n.
\end{align}
 
Focusing on the dynamics of the first block of this system, these collapse down to
 \begin{align}\label{mode}
 \frac{d}{dt} \begin{bmatrix} \tilde{x}_{11} \\ \tilde{x}_{12}\end{bmatrix}
   & = \begin{bmatrix} 0 & 0 \\ \alpha_1 & -\alpha_2 +\alpha_3\end{bmatrix} \begin{bmatrix} \tilde{x}_{11} \\ \tilde{x}_{12}\end{bmatrix} 
 + \begin{bmatrix}0 \\ \frac{1}{\sqrt{n}}\end{bmatrix}\hat{u}(t)
 \\ &  + \frac{1}{\sqrt{n}} \begin{bmatrix}1 \\ 1 \end{bmatrix} \sum^{n}_{i = 1} d^v_{i}(t)\nonumber
 \end{align}
 where $(F_n^* \otimes I_2)^{-1} = (F_n \otimes I_2)$.
The first equation of \eqref{mode} (being the sum of the vehicle spacings around the ring) is an uncontrollable mode of this system, since it is unaffected by the control action $u(t)$. Crucially, this mode is also only marginally stable, and so it acts as in integrator on the the sum of the velocity disturbances
\begin{align}\label{weiner}
\dot{\tilde{x}}_{11}(t) = \frac{1}{\sqrt{n}}\sum_{i = 1}^{n} d^v_{i} (t)
\end{align}
or
\begin{align}
\tilde{x}_{11}(T)- \tilde{x}_{11}(0)= {\frac{1}{\sqrt{n}}}\int^T_{ 0}\sum_{i = 1}^{n} d^v_{i} (t)~dt.
\end{align}
This shows that the uncontrollable mode is simply the integral of the velocity disturbances.

\end{proof}

We remark that~\eqref{mode} corresponds to the first mode of the transformed system (which is different from vehicle 1). The consequences of this result are best illustrated by assuming that the velocity disturbances are zero mean and Gaussian. In this case, the uncontrollable mode simply integrates the noise, with it being impossible to unwind this effect through feedback control since the mode is also uncontrollable. So, unless the net disturbance condition \eqref{noise_cond} of the theorem is satisfied, then the ring length will undergo a random walk with its variance growing linearly in $t$ since \eqref{weiner} is a Weiner process. In the limit $t \to \infty$, this mode will diverge almost surely under any control action.


\begin{remark}
The equation \eqref{mode} shows that disturbances in the acceleration do not enter into the uncontrollable, marginally stable mode of concern, and hence these disturbances have little impact upon the stability of the system (as illustrated in Figure \ref{fig:fig_pert_v}). In practise, traffic flow rings will be subject to acceleration disturbances (e.g. from wind gusts or from random perturbations in the driver's accelerator etc.) which need to be regulated. The above theorem shows that there always exists some control law that can achieve this. In this way, this work can partly explain the success of the experiment in \cite{exp} since it was in an open environment, as well as the numerical comparison in \cite{yang}.
\end{remark}



\begin{remark}
From the duality of linear systems, the uncontrollability result of Theorem \ref{prop1} can be translated into an unobservability result when measurements of the preceding cars relative velocity are taken, since, in this case, the output matrix satisfies $C = B^T$.  As such, any observer will also suffer issues associated with an unobservable mode integrating up disturbances. Similar issues caused by an integrator were also encountered in  the observer design for super-capacitor energy storage devices \cite{drummond}.
\end{remark}

\begin{remark}
Theorem \ref{prop1}'s issue could be resolved if the autonomous vehicle could have control over its velocity directly. Currently, the control is through the acceleration of the autonomous vehicle.  
\end{remark}

\begin{remark}
It is noted that there is nothing special about the actuator being an autonomous vehicle, as the control policy could equally be applied by a human driver. However, autonomous vehicle control action could be applied more accurately and at a higher frequency than human drivers in theory. That being said, the question of how a human driver would react to being regulated by autonomous systems is not well-understood. 
\end{remark}



Alternatively, the uncontrollable mode could simply be identified from the derivative of a vehicle's spacing with the assumed disturbances
\begin{align}\label{disturbed_s_dot}
\dot{\tilde{s}}_i(t) = \tilde{v}_{i-1}(t)-\tilde{v}_i(t) + d^v_i.
\end{align}
Summing \eqref{disturbed_s_dot} over all the vehicles gives
\begin{align}
\sum^n_{i = 1}\dot{\tilde{s}}_i(t) = \sum^n_{i = 1}\left(\tilde{v}_{i-1}(t)-\tilde{v}_{i}(t) + d^v_i\right)= \sum^n_{i = 1} d^v_i,
\end{align}
generating the uncontrollable, marginally stable mode of \eqref{weiner}. The strength of considering the decomposition exploited in Theorem \ref{prop1} is that it shows that this uncontrollable mode is the only one in the system and also that acceleration disturbances can be easily rejected by feedback control. In this way, Theorem \ref{prop1} completely captures the influence of disturbances on the system.

\begin{figure*}[ht]
\begin{subfigure}{.5\textwidth}
  \centering
  \includegraphics[width=.65\linewidth]{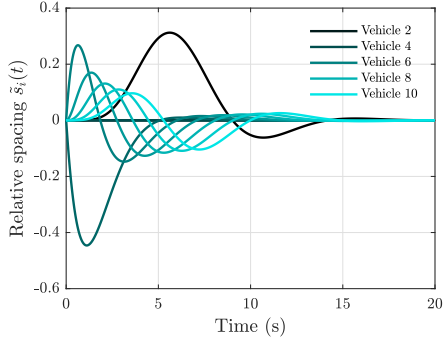}  
  \caption{Relative spacings of each car in the ring.}
  \label{fig:free_s}
\end{subfigure}
\begin{subfigure}{.5\textwidth}
  \centering
  \includegraphics[width=.65\linewidth]{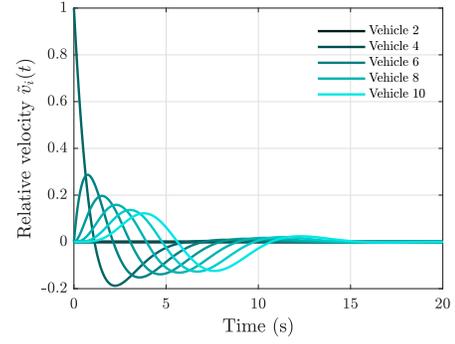}  
  \caption{Relative velocities of each car in the ring.}
  \label{fig:free_v}
\end{subfigure}
\caption{Response of the 10 vehicles in the traffic flow ring controlled by an autonomous vehicle via the policy \eqref{u} from an initial condition of of $\tilde{v}_5(0) = 1$ and no disturbances acting $d^a = w_v = 0$.}
\label{fig:fig_free}
\end{figure*}

\begin{figure*}[ht]
\begin{subfigure}{.5\textwidth}
  \centering
  \includegraphics[width=.65\linewidth]{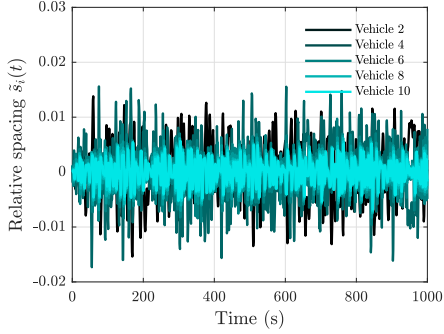}  
  \caption{Relative spacings of each car in the ring.}
  \label{fig:pert_v_s}
\end{subfigure}
\begin{subfigure}{.5\textwidth}
  \centering
  \includegraphics[width=.65\linewidth]{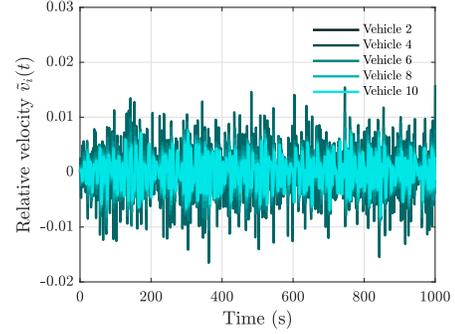}  
  \caption{Relative velocities of each car in the ring.}
  \label{fig:pert_v_v}
\end{subfigure}
\caption{Response of the 10 vehicles in the traffic flow ring controlled by an autonomous vehicle via the policy \eqref{u} from $x(0) = 0$ when under Gaussian, zero mean acceleration  disturbances of variance 1 only applied to $w_5^a(t)$. The response is effectively controlled.}
\label{fig:fig_pert_v}
\end{figure*}

 \begin{table}
 \caption{Parameters for the simulations of Figures \ref{fig:fig_free}-\ref{fig:fig_pert_s}.}
\begin{tabular}{c|l | l} 
\toprule
 & \multicolumn{1}{c}{Model parameters.} & \multicolumn{1}{c}{Value.} \\
\hline
$n$ & Number of cars in the ring. & $10$\\
$N_{\text{dist}}$ & Car with disturbance acting upon it. & $5$\\
$\alpha$ & Factor on the spacing. & $0.6$ \\
$\beta$ & Factor on the velocity.  &$0.9$ \\
$s^*$ & Equilibrium spacing. & 20 m \\
$v^*$ & Equilibrium velocity. &  m s$^{-1}$ \\
$s_{\text{st}}$ & Min. spacing of the vehicles. & 5 m \\
$s_{\text{go}}$ & Max. spacing of the vehicles. & 35 m \\
$v_{\text{max}}$ & Max. velocity of the vehicle. & 30 m s$^{-1}$ \\
$F(\cdot,\cdot,\cdot)$ &Forcing function & N. \\
$\alpha_1$ & $\partial F/\partial s$ & 1.5708 N m$^{-1}$ \\
$\alpha_2$ & $\partial F/\partial v-\partial F/\partial \dot{s}$ & 1.5 N s m$^{-1}$  \\
$\alpha_3$ & $ \partial F/\partial \dot{s}$ & 0.9 N s m$^{-1}$ \\
$d^v$ & Velocity disturbances. &   \\
$d^a$ & Acceleration disturbances. &  \\
\bottomrule 
\end{tabular}\label{tab:params2}
\end{table}

\section{Numerical experiments}\label{sec:sim}
To illustrate the implications of Theorem~\ref{prop1}, a traffic flow ring modelled by \eqref{sys} and \eqref{sys1} was simulated with disturbances in both  velocity and acceleration. For these simulations, the parameters of Table 1 were used (being similar to those used in \cite{jin}), corresponding to 10 vehicles in the ring with an average spacing of 20 m. The following simple control law was used for the autonomous vehicles' acceleration
\begin{align}\label{u}
u(t) = \sum^5_{i = 1}\gamma_i\tilde{s}_i(t) + \lambda_i \tilde{v}_i,
\end{align}
with the control weightings $\gamma \in \mathbb{R}^5$, $\lambda \in \mathbb{R}^5$ set to unity $\gamma_i =\lambda_i = 1$. In this way, the autonomous vehicle's control was determined from the sum of the relative spacing and velocity of the five preceding vehicles. Whilst being a simple enough feedback, it is emphasized that the statement of Theorem \ref{prop1} being tested by the numerical simulations is independent of the applied control. It is also noted that, due to the uncontrollable mode, there does not exist a positive definite solution to the Riccatti equation for obtaining an LQR gain for this system, and so special care has  be taken in choosing an optimal controller.

\begin{figure*}[ht]
\begin{subfigure}{.5\textwidth}
  \centering
  \includegraphics[width=.65\linewidth]{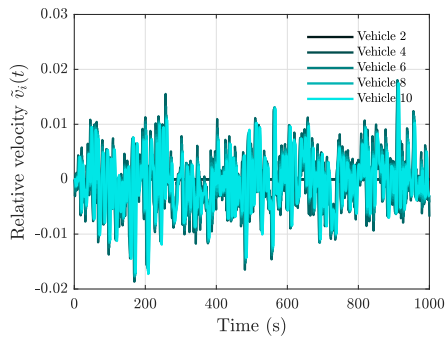}  
  \caption{Relative velocities of each car in the ring.}
  \label{fig:pert_s_s}
\end{subfigure}
\begin{subfigure}{.5\textwidth}
  \centering
  \includegraphics[width=.65\linewidth]{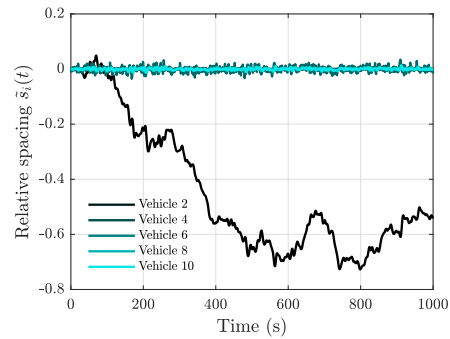}  
  \caption{Relative displacements of each car in the ring.}
  \label{fig:pert_s_v}
\end{subfigure}
\caption{Response of the 10 vehicles in the traffic flow ring controlled by an autonomous vehicle via the policy \eqref{u} from $x(0) = 0$ when under Gaussian, zero mean velocity disturbances of variance 1 only applied to vehicle 5 $w_5^v(t)$. The disturbances are integrated inside an uncontrollable mode of the system leading to a random walk.}
\label{fig:fig_pert_s}
\end{figure*}
The results of the simulations are shown in Figures \ref{fig:fig_free}-\ref{fig:fig_pert_s}, with Figure \ref{fig:fig_free} showing the response from an initial perturbation in the velocity of car 5 $\tilde{v}_5(0) = 1$ and no disturbances applied, Figure \ref{fig:fig_pert_v} showing the response of the ring when subject to disturbances in the acceleration only and Figure \ref{fig:fig_pert_s} the response subject to disturbances in the velocity only. For both cases, the noise was taken to be zero-mean Gaussian with variance 1.

As predicted by Theorem \ref{prop1}, when the velocity perturbations $d^v(t)$ are non-zero, then the disturbance enters an uncontrollable mode (corresponding to the total relative spacings of the vehicles) whereupon it is integrated, resulting in the vehicles undergoing a random walk. In contrast, both the free response (Figure \ref{fig:fig_free}) and the impact of acceleration disturbances $d^a(t)$ (Figure \ref{fig:fig_pert_v}) were rather benign. Velocity disturbance were the key to destabilising the system, agreeing with the conclusion of Theorem \ref{prop1}.

\section{Conclusion}
\label{sec:con}
 It was shown that disturbances acting upon the velocity of any vehicle in a traffic flow ring with autonomous vehicle to control will be integrated up inside an uncontrollable mode. This mode is defined by the sum of the vehicle's relative spacings and cannot be unwound using feedback. To illustrate how this effect may become an issue, it implies that if the velocity disturbance was assumed to be zero mean and Gaussian, then the vehicles will undergo the undesirable response of a random walk which can not be stabilised. In contrast, acceleration disturbances do not enter this uncontrollable mode and hence are not an issue for stabilization. In this way, the non-impact of disturbances in recent autonomous vehicle control experiments could be explained. However, this work also implies that the application of velocity disturbances can destabilise any traffic flow ring controlled by an autonomous vehicle.

\bibliographystyle{IEEEtran}
\bibliography{bibliog}

\begin{thebibliography}{10}
\providecommand{\url}[1]{#1}
\csname url@samestyle\endcsname
\providecommand{\newblock}{\relax}
\providecommand{\bibinfo}[2]{#2}
\providecommand{\BIBentrySTDinterwordspacing}{\spaceskip=0pt\relax}
\providecommand{\BIBentryALTinterwordstretchfactor}{4}
\providecommand{\BIBentryALTinterwordspacing}{\spaceskip=\fontdimen2\font plus
\BIBentryALTinterwordstretchfactor\fontdimen3\font minus
  \fontdimen4\font\relax}
\providecommand{\BIBforeignlanguage}[2]{{%
\expandafter\ifx\csname l@#1\endcsname\relax
\typeout{** WARNING: IEEEtran.bst: No hyphenation pattern has been}%
\typeout{** loaded for the language `#1'. Using the pattern for}%
\typeout{** the default language instead.}%
\else
\language=\csname l@#1\endcsname
\fi
#2}}
\providecommand{\BIBdecl}{\relax}
\BIBdecl

\bibitem{litman}
T.~Litman, \emph{Autonomous vehicle implementation predictions}.\hskip 1em plus
  0.5em minus 0.4em\relax Victoria Transport Policy Institute Victoria, Canada,
  2017.

\bibitem{guanetti2018control}
J.~Guanetti, Y.~Kim, and F.~Borrelli, ``Control of connected and automated
  vehicles: State of the art and future challenges,'' \emph{Annual Reviews in
  Control}, vol.~45, pp. 18--40, 2018.

\bibitem{li2017dynamical}
S.~E. Li, Y.~Zheng, K.~Li, Y.~Wu, J.~K. Hedrick, F.~Gao, and H.~Zhang,
  ``Dynamical modeling and distributed control of connected and automated
  vehicles: Challenges and opportunities,'' \emph{IEEE Intelligent
  Transportation Systems Magazine}, vol.~9, no.~3, pp. 46--58, 2017.

\bibitem{exp}
R.~E. Stern, S.~Cui, M.~L. Delle~Monache, R.~Bhadani, M.~Bunting, M.~Churchill,
  N.~Hamilton, H.~Pohlmann, F.~Wu, B.~Piccoli \emph{et~al.}, ``Dissipation of
  stop-and-go waves via control of autonomous vehicles: Field experiments,''
  \emph{Transportation Research Part C: Emerging Technologies}, vol.~89, pp.
  205--221, 2018.

\bibitem{laval}
J.~A. Laval and L.~Leclercq, ``A mechanism to describe the formation and
  propagation of stop-and-go waves in congested freeway traffic,''
  \emph{Philosophical Transactions of the Royal Society A: Mathematical,
  Physical and Engineering Sciences}, vol. 368, no. 1928, pp. 4519--4541, 2010.

\bibitem{sugiyama2008traffic}
Y.~Sugiyama, M.~Fukui, M.~Kikuchi, K.~Hasebe, A.~Nakayama, K.~Nishinari, S.-i.
  Tadaki, and S.~Yukawa, ``Traffic jams without bottlenecks—experimental
  evidence for the physical mechanism of the formation of a jam,'' \emph{New
  Journal of Physics}, vol.~10, no.~3, p. 033001, 2008.

\bibitem{smulders}
S.~Smulders, ``Control of freeway traffic flow by variable speed signs,''
  \emph{Transportation Research Part B: Methodological}, vol.~24, no.~2, pp.
  111--132, 1990.

\bibitem{cui}
S.~Cui, B.~Seibold, R.~Stern, and D.~B. Work, ``Stabilizing traffic flow via a
  single autonomous vehicle: Possibilities and limitations,'' in
  \emph{Intelligent Vehicles Symposium}.\hskip 1em plus 0.5em minus 0.4em\relax
  IEEE, 2017, pp. 1336--1341.

\bibitem{yang}
Y.~Zheng, J.~Wang, and K.~Li, ``Smoothing traffic flow via control of
  autonomous vehicles,'' \emph{IEEE Internet of Things Journal}, vol.~7, no.~5,
  pp. 3882--3896, 2020.

\bibitem{giammarino2019weaker}
V.~Giammarino, M.~Lv, S.~Baldi, P.~Frasca, and M.~L. Delle~Monache, ``On a
  weaker notion of ring stability for mixed traffic with human-driven and
  autonomous vehicles,'' in \emph{Conference on Decision and Control}.\hskip
  1em plus 0.5em minus 0.4em\relax IEEE, 2019, pp. 335--340.

\bibitem{helbing}
D.~Helbing, ``Traffic and related self-driven many-particle systems,''
  \emph{Reviews of Modern Physics}, vol.~73, no.~4, p. 1067, 2001.

\bibitem{orosz}
G.~Orosz, R.~E. Wilson, and G.~St{\'e}p{\'a}n, ``Traffic jams: dynamics and
  control,'' 2010.

\bibitem{bando}
M.~Bando, K.~Hasebe, A.~Nakayama, A.~Shibata, and Y.~Sugiyama, ``Dynamical
  model of traffic congestion and numerical simulation,'' \emph{Physical review
  E}, vol.~51, no.~2, p. 1035, 1995.

\bibitem{wilson}
R.~E. Wilson and J.~A. Ward, ``Car-following models: Fifty years of linear
  stability analysis- a mathematical perspective,'' \emph{Transportation
  Planning and Technology}, vol.~34, no.~1, pp. 3--18, 2011.

\bibitem{wang2020controllability}
J.~{Wang}, Y.~{Zheng}, Q.~{Xu}, J.~{Wang}, and K.~{Li}, ``Controllability
  analysis and optimal control of mixed traffic flow with human-driven and
  autonomous vehicles,'' \emph{IEEE Transactions on Intelligent Transportation
  Systems}, pp. 1--15, 2020.

\bibitem{kalman}
R.~E. Kalman, ``Mathematical description of linear dynamical systems,''
  \emph{Journal of the Society for Industrial and Applied Mathematics, Series
  A: Control}, vol.~1, pp. 152--192, 1963.

\bibitem{marshall}
J.~A. Marshall, M.~E. Broucke, and B.~A. Francis, ``Formations of vehicles in
  cyclic pursuit,'' \emph{IEEE Transactions on Automatic Control}, vol.~49,
  no.~11, pp. 1963--1974, 2004.

\bibitem{olson}
B.~J. Olson, S.~W. Shaw, C.~Shi, C.~Pierre, and R.~G. Parker, ``Circulant
  matrices and their application to vibration analysis,'' \emph{Applied
  Mechanics Reviews}, vol.~66, no.~4, 2014.

\bibitem{drummond}
R.~Drummond and S.~R. Duncan, ``On observer performance for an electrochemical
  supercapacitor model,'' in \emph{Conference on Control Applications}.\hskip
  1em plus 0.5em minus 0.4em\relax IEEE, 2015, pp. 1260--1265.

\bibitem{jin}
I.~G. Jin and G.~Orosz, ``Optimal control of connected vehicle systems with
  communication delay and driver reaction time,'' \emph{IEEE Transactions on
  Intelligent Transportation Systems}, vol.~18, no.~8, pp. 2056--2070, 2016.

\end{thebibliography}

\end{document}